\newtheorem{theorem}{Theorem}[section]
\newtheorem{example}[theorem]{Example} 
\newtheorem{definition}[theorem]{Definition}
\newcommand{\Srv}{\textsc{Srv}\xspace}
\newcommand{\Cli}{\textsc{Cli}\xspace}
\newcommand{\sat}[2]{\ensuremath{\textsf{sat}(#1,#2)}\xspace}
\newcommand{\srvPre}{\ensuremath{\sqsubseteq_\Srv}\xspace}
\title{Preliminary Results Towards Contract Monitorability}
\author{
Annalizz Vella
\institute{CS, ICT,  University of Malta}
\email{annalizz.vella.10@um.edu.mt}
\and
Adrian Francalanza
\institute{CS, ICT,  University of Malta}
\email{adrian.francalanza@um.edu.mt}
}
\begin{document}

\maketitle

\begin{abstract}
  This paper discusses preliminary investigations on the monitorability of contracts for web service descriptions. There are settings where servers do not guarantee statically whether they satisfy some specified contract, which forces the  client (\ie the entity interacting with the server) to perform dynamic checks.  This scenario may be viewed as an instance of Runtime Verification, where a pertinent question is whether  contracts can be monitored for adequately at runtime, otherwise stated as the \emph{monitorability of contracts}.  We consider a simple language of finitary contracts describing both clients and servers, and develop a formal framework that describes server contract monitoring.  We define monitor properties that potentially contribute towards a comprehensive notion of contract monitorability and show that our simple contract language satisfies these properties.     
\end{abstract}

\section{Introduction}
\label{sec:introduction}

Web services \cite{Castagna:2009:TCW:1538917.1538920:Contracts,CarCasLanPad:2006:FACWS:Contracts} typically consist of two types of computing entities. \emph{Servers} offer ranges of sequences of \emph{service interactions} to \emph{clients}, which in turn interact with these services and occasionally reach a state denoting client satisfaction.  The service interactions offered by a server typically follow some predefined structure that may be formalised as a contract \cite{CarCasLanPad:2006:FACWS:Contracts,Castagna:2009:TCW:1538917.1538920:Contracts,Padovani:2009:CDAWS:contracts,DBLP:journals/corr/BernardiH15}.  Dually, the service interactions invoked by a client may also be expressed within the same formalism.  

The contract calculus defined in \cite{Padovani:2009:CDAWS:contracts,Bernardi:2012:MST:2245276.2232097:Contracts,DBLP:journals/mscs/BravettiZ09a} is an abstract formalism equipped with an operational semantics that provides an implementation-agnostic, high-level description of client-server interactions; this permits formal reasoning about web services, such as  whether a client is compatible with a server or whether a server is able to satisfy the service interactions requested by the client. Such reasoning may, for instance, be used by clients for \emph{dynamic service discovery}, where a client decides to interact with a server whenever the contract it advertises satisfies the requirements of the client.

\begin{example}
Consider the contract below  describing the behaviour of an internet banking server: 
\begin{align*}
\prf{\nam{login}}{\bigl(\isel{(\prf{\coact{\nam{valid}}}{(\esel{\prf{\nam{query}}{\nil}}{\prf{\nam{transfer}}{\nil}})}) }{(\prf{\coact{\nam{invalid}}}{\nil})}\bigr)}
\end{align*}
It states that the server first expects a \nam{login} service interaction followed by either a \nam{valid} or \nam{invalid} service invocation; the operator $\isel{}{}$ denotes that the server decides autonomously whether to invoke \nam{valid} or \nam{invalid} in response.  If it branches to the latter, it terminates all interactions, denoted by $\nil$.  However, if it internally decides to invoke the service interaction \nam{valid}, it then offers a choice (denoted by the symbol $\esel{}{}$) of service interactions: it either accepts (account balance) \nam{query} interactions or else  (fund) \nam{transfer} interactions.    A contract describing the behaviour of a possible bank client is given below:
\begin{align*}
  \prf{\coact{\nam{login}}}{\bigl( 
      \esel
          {(\prf{\nam{invalid}}{\prf{\coact{\nam{reason}}}{\ok}})}
          {\esel
            {(\prf{\nam{expired}}{\ok})}
            {(\prf{\nam{valid}}{\prf{\coact{\nam{query}}}{\ok}})}
          }
  \bigr)}
\end{align*}
After a login service invocation, this client expects either of three responses: an \nam{invalid} interaction prompting another service request that asks for a \nam{reason} why the login was invalid, a login \nam{expired} invocation  or else a \nam{valid} login interaction that is followed  by invoking a \nam{query} service request.    All these alternative sequences leave the client in a satisfied state, \ok.  By analysing the \resp contracts, one can deduce that interactions on the \nam{valid} service following a client \nam{login} interaction necessarily lead to a \nam{query} interaction, which then leaves the client satisfied. One can also discern that \nam{invalid} interactions lead to a deadlock, whereby the client asks for a \nam{reason} service that is not offered by the server.  One can also note that the \nam{expired} option offered by the client is never chosen by the server.   \exqed 
\end{example}

Within this framework, there still remains the question of whether a service behaviour actually adheres to the contract it advertises. In general, static techniques (such as session-based type systems \cite{DCDL:2009:SST:1880906.1880907:SessionTypes}, or state-based model-checking of compliance, must or fair testing inclusion  \cite{Padovani:2009:CDAWS:contracts,Bernardi:2012:MST:2245276.2232097:Contracts,DBLP:journals/mscs/BravettiZ09a}) are used to verify \emph{before deployment} whether a server implementation respects the contract that describes it.  However, there are cases where this solution is not applicable. For instance, the client may decide \emph{not} to trust the static verifier used by the server. Alternatively, in a dynamic setting where service components are downloaded and installed at runtime, pre-deployment checks cannot be made on the server implementation since some components only become available for inspection at runtime.  There are also cases whereby a server does not come equipped with a formal description at all.

In these circumstances, a client can check that a server respects an advertised (or expected) contract by analysing the behaviour exhibited by the server \emph{at runtime}.   There are a number of cases where such a solution is adopted \cite{BocchiCDHY:2013:MNMST:SessionTypesContracts,Jia:2016:MBA:2837614.2837662:MonitorsSessionTypes}, making use of dynamic monitoring, possibly in conjunction with  other verification techniques.  This monitoring of systems may be seen as an instance of  Runtime Verification (RV) \cite{Leucker:2009:rv}, a lightweight formal verification technique used to check the current execution of a program by verifying it against some properties. 
In a typical setup, the monitor observing the running system raises a flag when a \emph{conclusive} verdict is reached, denoting that the property being checked for is either \emph{satisfied} or \emph{violated}.

An important question in any RV setup is that of the \emph{monitorability} of the specification language considered.  Indeed, it is generally the case that not all aspects of a specification can be monitored for and determined at runtime, as shown in  \cite{CF:2015:LTLPSRV:RV-LTL,BLC:2011:RVL:2000799.2000800:RV-LTL,FrancalanzaAI:2015:uHML} for specification languages such as LTL and the modal $\mu$-calculus.  In this work, we start to investigate the monitorability of contracts which, in turn, sheds light on the viability and expressiveness of the dynamic contract checking setup discussed above. In contrast to earlier work on monitorability, we do \emph{not} rely on an external formal logic for specifying the properties expected by a server contract, \eg a satisfaction relation $p \models \phi$ where $\phi$ would be a formula from a logic defined over server contract $p$ through the semantic relation $\models$.  Instead, we use the subcontract server relation $q \srvPre p$ defined in \cite{Padovani:2009:CDAWS:contracts} as a \emph{refinement semantic relation} where $q$ is an abstract description of the expected properties of a server contract $p$, thus using the contract language itself as a specification language.  Within this setting, we investigate  whether our monitoring mechanism is expressive enough to verify whether a server $p$ indeed refines an abstract description $q$.   

The rest of the paper is structured as follows.  Section~\ref{sec:language} overviews our contract language and defines our notion of contract satisfaction.  Section~\ref{sec:monitorability} introduces our monitoring setup and Section~\ref{sec:preliminary-results} relates verdicts reached by our monitored computations to the contract satisfactions discussed in Section~\ref{sec:language}.  Section~\ref{sec:conclusion} concludes by discussing related and future work.

\section{Servers, Clients and Satisfaction}
\label{sec:language}
\begin{figure}[t]
    \textbf{Syntax}
     \begin{align*}
       p,q\in\Srv & \bnfdef \nil && (\text{inaction}) &
       & \bnfsepp \prf{\act}{p} && (\text{prefixing}) \\
       & \bnfsepp \esel{p}{q} && (\text{external choice}) &
      & \bnfsepp \isel{p}{q} && (\text{internal choice}) 
     \end{align*}
     \begin{align*}
       r,s\in\Cli & \bnfdef \nil 
       &&  \bnfsepp \prf{\act}{r} 
       && \bnfsepp \esel{r}{s}  
       && \bnfsepp \isel{r}{s} 
       && \bnfsepp \ok && (\text{success})
     \end{align*}
   \textbf{Dynamics}
   \begin{mathpar}
 \inference[\rtit{Act}]{}{\prf{\act}{p}  \traSS{\act} p} \and
  \inference[\rtit{SelL}]{p \traSS{\mu} p'}{\esel{p}{q}  \traSS{\mu} p'}
  \and  
  \inference[\rtit{SelR}]{q \traSS{\mu} q'}{\esel{p}{q}  \traSS{\mu} q'}
\\
  \inference[\rtit{ChoL}]{}{\isel{p}{q}  \traSS{\acttau} p}
  \and
  \inference[\rtit{ChoR}]{}{\isel{p}{q}  \traSS{\acttau} q}  
\end{mathpar}
\textbf{Interaction}
\begin{mathpar}
  \inference[\rtit{AsyS}]{p \traSS{\acttau} q}{r\paral p \traSS{\acttau} r \paral q} \and
  \inference[\rtit{AsyC}]{r \traSS{\acttau} s}{r\paral p \traSS{\acttau} s \paral p} \and
  \inference[\rtit{Syn}]{r \traSS{\coact{\act}} s & p \traSS{\act} q}{r \paral p \traSS{\acttau} s \paral q} 
\end{mathpar}
  \caption{Server and Client Syntax and Semantics}
  \label{fig:syntax-semantics}
\end{figure}

Figure~\ref{fig:syntax-semantics} describes the syntax and semantics of (finite) servers and clients. Let $a,b,c,d \ldots \in\Names$ be a set of names denoting interaction addresses. Let $\coact{\cdot}$ be a complementation operation on these names where we refer to the complement of $a$ as $\coact{a}$; the operation is an involution, where $\coact{\coact{a}} = a$. The set of actions $\act\in\Act = \bigl(\Names \cup \sset{\coact{a} \ |\ a \in \Names}\bigr)$ includes all names and their complement. Let $\tau$ be a distinct action \emph{not} in $\Act$ 
denoting \emph{internal} unobservable activity, where we let $\acttt \in \Act \cup \sset{\acttau}$. 
Servers, $p,q\in\Srv$, consist of either the terminated server \nil, a prefixed server $\prf{\act}{p}$ that first engages in interaction $\act$ and then behaves as $p$, an external choice $\esel{p}{q}$ that can either behave as $p$ or $q$ depending on the interactions it engages in, or an internal choice \isel{p}{q} that autonomously decides to either behave as $p$ or $q$.  Clients, $r,s\in\Cli$, have a similar structure but may also consist of the term \ok denoting contract fulfilment.   
The semantics of both servers and clients are given in terms of a Labelled Transition System (LTS) where the labelled transition relation $p \traSS{\acttt} q$ is defined as the least relation satisfying the rules in  Figure~\ref{fig:syntax-semantics}; the definition of the transition relation for clients $r \traSS{\acttt} s$ is analogous and thus elided.  The definition is standard and follows that of related languages such as CCS \cite{Milner:1989:book}. For instance, the term $\prf{\act}{p}$ transitions with (action) label \act to the continuation $p$;  if $p$ can engage in an interaction on $\mu$ and transition to $p'$, then an external choice term involving $p$, \eg $\esel{p}{q}$ may also transition to $p'$ after exhibiting action $\mu$; by contrast, an internal choice involving $p$, \eg $\isel{p}{q}$ may transition to $p$ without exhibiting an external action ($\acttau$ is used). 
Servers and clients may be composed together to form a system, $r \paral p$, so as to engage in a sequence of interactions. Interactions are also defined as an LTS over systems, through the rules \rtit{AsyS}, \rtit{AsyC} and \rtit{Syn} in Figure~\ref{fig:syntax-semantics}. As is standard, silent transitions by either server or client allow them to transition autonomously in a system. However, a client transition on an external action must be matched by a server transition on the (dual) co-action for the transition to occur in the \resp system, denoting client-server interaction.    
\emph{Computations} are sequences of system transitions $r_0 \paral p_0 \traSS{\acttau} \ldots  \traSS{\acttau} r_n \paral p_n$, denoted as $r_0 \paral p_0  \wreduc  r_n \paral p_n$; the sequence may be potentially empty, $n=0$, where \emph{no} transitions are made, in which case we have  $r_0=r_n$ and $p_0=p_n$.  A computation $r_0 \paral p_0  \wreduc  r_n \paral p_n$ is \emph{maximal} whenever $\not \exists r',p' \,\cdot\,  r_n \paral p_n \traSS{\acttau} r' \paral p' $.  
\begin{definition} \label{sec:client-satisfaction} A maximal computation, $r \paral p  \wreduc  s \paral q$, is \emph{successful}, whenever the client's contract is fulfilled, meaning that $s=\ok$.  A service $p$ \emph{satisfies} a client $r$, denoted as \sat{p}{r}, when \emph{every} maximal computation rooted at $r\paral p$ is successful. \exqed
\end{definition}
\begin{example} 
  The server 
  \begin{math}
    p = \esel{\prf{\coact{a}}{\nil}}{(\isel{\prf{b}{\prf{a}{\nil}}}{\prf{c}{\nil}})}
  \end{math} may either transition as 
  \begin{math}
    p \traSS{\coact{a}} \nil
  \end{math} using rules \rtit{Act} and \rtit{SelL} from Figure~\ref{fig:syntax-semantics}, or silently transition as 
   \begin{math}
    p \traSS{\acttau} \prf{b}{\prf{a}{\nil}} 
  \end{math}
   or 
   \begin{math}
    p \traSS{\acttau} \prf{c}{\nil} 
  \end{math}
  via rules \rtit{ChoL}, \rtit{ChoR} and  \rtit{SelR} from Figure~\ref{fig:syntax-semantics}.  
  It satisfies the client 
   \begin{math}
     r = \esel{\prf{\coact{b}}{\ok}}{\prf{\coact{c}}{\ok}}
   \end{math},
   denoted as \sat{p}{r},  because the only maximal computations possible are the following
   \begin{align*}
     & r \paral p \traSS{\acttau} r \paral  \prf{b}{\prf{a}{\nil}}  \traSS{\acttau} \ok \paral \prf{a}{\nil} &
     & r \paral p \traSS{\acttau} r \paral \prf{c}{\nil} \traSS{\acttau} \ok \paral \nil
   \end{align*}
   both of which are successful.  By contrast, server $p$ does \emph{not} satisfy client 
   \begin{math}
     \prf{\coact{b}}{\ok}
   \end{math}, 
   denoted as $\neg \sat{p}{\prf{\coact{b}}{\ok}}$, nor does it satisfy the clients 
  \begin{math}
     \esel{\esel{\prf{\coact{b}}{\ok}}{\prf{\coact{b}}{\nil}}}{\prf{\coact{c}}{\ok}}
   \end{math}
   and 
  \begin{math}
     \esel{\prf{\coact{b}}{\prf{c}{\ok}}}{\prf{\coact{c}}{\ok}}.
   \end{math}
   In each case, we can show this through the \emph{unsuccessful} maximal computations below.
   \begin{align*}
       & \prf{\coact{b}}{\ok} \paral p \traSS{\acttau} \prf{\coact{b}}{\ok} \paral \prf{c}{\nil}
     \qquad\qquad        \esel{\esel{\prf{\coact{b}}{\ok}}{\prf{\coact{b}}{\nil}}}{\prf{\coact{c}}{\ok}} \paral p \traSS{\acttau} \esel{\esel{\prf{\coact{b}}{\ok}}{\prf{\coact{b}}{\nil}}}{\prf{\coact{c}}{\ok}} \paral \prf{b}{\nil} \traSS{\acttau} \nil \paral \nil
     \\
     \qquad& \esel{\prf{\coact{b}}{\prf{c}{\ok}}}{\prf{\coact{c}}{\ok}} \paral p  \traSS{\acttau} \esel{\prf{\coact{b}}{\prf{c}{\ok}}}{\prf{\coact{c}}{\ok}} \paral  \prf{b}{\nil}  \traSS{\acttau} \prf{c}{\ok} \paral \nil  &\qquad\qquad\;\;\;\;\exqed
   \end{align*} 
\end{example}

\noindent
The satisfaction predicate \sat{-}{-} induces a natural preorder amongst servers.

\begin{definition}[Server Preorder \cite{Padovani:2009:CDAWS:contracts}]\label{def:server-pre}
  A server $p$ is a subcontract of server $q$, denoted as $p \srvPre q$,  whenever, for \emph{all} clients $r$, \sat{p}{r} implies \sat{q}{r}.  Dually, $q$ is referred to as a supercontract of $p$. \exqed 
\end{definition}

Intuitively, $p \srvPre q$ of Definition~\ref{def:server-pre} means that we can substitute a server $p$ by a server $q$, safe in the knowledge that any client satisfied by $p$ would not be affected. 

\begin{example}\label{ex:serv-preorder}  Definition~\ref{def:server-pre} allows us to establish a number of useful server (in)equations such as
  \begin{align*}
    & \isel{\prf{\coact{a}}{\nil}}{\prf{b}{\nil}} \srvPre \prf{\coact{a}}{\nil}  && \esel{\prf{b}{\prf{a}{\nil}}}{\prf{b}{\prf{c}{\nil}}}  \srvPre \prf{b}{(\isel{\prf{a}{\nil}}{\prf{c}{\nil}})}   && \prf{b}{(\isel{\prf{a}{\nil}}{\prf{c}{\nil}})}   \srvPre  \esel{\prf{b}{\prf{a}{\nil}}}{\prf{b}{\prf{c}{\nil}}}
  \end{align*}
  but also justify subtle cases where substituting one server for another might break client satisfaction.  For instance, we have $\nil \not\srvPre \prf{a}{\nil}$ because for the client $\esel{(\isel{\ok}{\ok})}{\prf{\coact{a}}{\nil}}$  we have \sat{\nil}{\esel{(\isel{\ok}{\ok})}{\prf{\coact{a}}{\nil}}} since
  ${\esel{(\isel{\ok}{\ok})}{\prf{\coact{a}}{\nil}} \paral \nil \traSS{\acttau}  \ok \paral \nil}$ is the only maximal  computation (which is also successful), \emph{but} also have $\neg\sat{\prf{a}{\nil}}{\esel{(\isel{\ok}{\ok})}{\prf{\coact{a}}{\nil}}}$ due to the unsuccessful computation
  $\esel{(\isel{\ok}{\ok})}{\prf{\coact{a}}{\nil}} \paral \prf{a}{\nil} \traSS{\acttau} \nil\paral\nil$. \exqed
\end{example}



\section{Monitors and Monitored Computations}
\label{sec:monitorability}
\begin{figure}[t]
	
	\textbf{Syntax}
	\begin{align*}
	\vV,\vVV \in \Verd\  \bnfdef\ &  \yes && (\text{acceptance}) && \bnfsepp  \no && (\text{rejection})\\
	\bnfsepp & \stp && (\text{inconclusive}) \\
	\mAct \in \textsc{Patterns} \ \bnfdef\ & \act && (\text{action}) && \bnfsepp \notAct{\act} && (\text{complement})\\
	\mV,\mVV\in\Mon\ \bnfdef\ & \vV && \text{(verdict)} && \bnfsepp  \prf{\mAct}{\mV} && \text{(interaction)} \\
	\bnfsepp & \esel{\mV}{\mVV} && \text{(choice)} && \bnfsepp  \mtimes{\mV}{\mVV} && \text{(conjunction)} 
	\end{align*}
	
	\textbf{Dynamics}
	\begin{mathpar}
		\inference[\rtit{mVer}]{}{\vV \traSS{\act} \vV}\and
		\inference[\rtit{mAct}]{}{\prf{\act}{\mV}  \traSS{\act} \mV} \and
		\inference[\rtit{mNAct}]{\actt \neq \act}{\prf{\notAct{\act}}{\mV}  \traSS{\actt} \mV}
		\\
		\inference[\rtit{mSelL}]{\mV \traSS{\act} \mV'}{\esel{\mV}{\mVV}  \traSS{\act} \mV'} \and
		\inference[\rtit{mSelR}]{\mVV \traSS{\act} \mVV'}{\esel{\mV}{\mVV}  \traSS{\act} \mVV'} 
		\and
		\inference[\rtit{mConj}]{\mV \traSS{\act} \mV' & \mVV \traSS{\act} \mVV'}{\mtimes{\mV}{\mVV}  \traSS{\act} \mtimes{\mV'}{\mVV'}} 
	\end{mathpar}
	
	\textbf{Instrumentation}
	\begin{mathpar}
		\inference[\rtit{iMon}]{p \traSS{\act} p' & \mV \traSS{\act} \mV'}{\sys{\mV}{p} \traSS{\act} \sys{\mV'}{p'}} 
		\and
		\inference[\rtit{iTer}]{p \traSS{\act} p' & \mV \traSSN{\act}  
                }{\sys{\mV}{p} \traSS{\act} \sys{\stp}{p'}}
		\and
		\inference[\rtit{iAsy}]{p \traSS{\acttau} p'}{\sys{\mV}{p} \traSS{\acttau} \sys{\mV}{p'}}
	\end{mathpar}
	
	\caption{Monitors and Instrumentation}
	\label{fig:monit-instr}
	\end{figure}

Figure~\ref{fig:monit-instr} describes  the monitoring framework used to analyse servers purporting to adhere to some advertised contract.
It defines the syntax of these monitors, which follow the general structure used in earlier works \cite{FrancalanzaAI:2015:uHML,BLC:2011:RVL:2000799.2000800:RV-LTL} whereby monitors may reach any one of the three verdicts \Verd, namely acceptance, rejection, or the inconclusive verdict. In addition to the basic prefixing patterns used in  \cite{FrancalanzaAI:2015:uHML,Fra16}, we here also use action complementation, $\notAct{\act}$, to denote any action \emph{apart from} $\act$.  As in \cite{FrancalanzaAI:2015:uHML,Fra16}, a monitor is allowed to branch, \esel{\mV}{\mVV}, depending on the actions observed at runtime.  We also find it convenient to express a merge monitor operator that facilitates the composition of monitor specifications, \mtimes{\mV}{\mVV}.  

The semantics of a monitor is given in terms of the LTS defined by the rules in Figure~\ref{fig:monit-instr}. This is best viewed as the evolution of a monitor in response to a (finite) execution trace $t \in \Act^\ast$, consisting of a sequence of actions $\act_1,\ldots,\act_n$.  Verdicts are irrevocable when reached, and do not change upon viewing further actions in the trace (rule \rtit{mVer}). Prefixing releases the guarded monitor when the expected pattern is encountered (rules \rtit{mAct} and \rtit{mNAct}). The rules  \rtit{mSelL} and \rtit{mSelR} describe left and right monitor branching as expected, whereas rule \rtit{mConj} describes the synchronous evolution of merged monitors.

A \emph{monitored server contract} consists of a server $p$ that is instrumented with a monitor \mV, denoted as \sys{\mV}{p}.  The behaviour of monitored contracts is defined as an LTS through the rules stated in Figure~\ref{fig:monit-instr}, and relies on the \resp LTSs of the monitor and the server.  Rule \rtit{iMon} states that if a server can transition with action $\act$ and the  monitor can follow this by transitioning with the same action, then in an instrumented server  they transition in lockstep. However, if the monitor cannot follow such a transition the instrumentation forces it to terminate with an inconclusive verdict, \stp, while the process is allowed to proceed unaffected; see rule \rtit{iTer}.  Finally, rule \rtit{iAsy} allows a contract to evolve independently from the monitor when performing silent $\tau$ moves (which are unobservable to the monitor).  We refer to a sequence of transitions from a monitored contract as a \emph{monitored computation} and use the standard notation $\sys{m}{p} \wtraSS{\,t\,} \sys{m'}{p'}$ that abstracts over $\tau$-moves in trace $t$. 

A few comments are in order.  First, we highlight the fact that in the operational semantics for monitored systems of \figref{fig:monit-instr},  the monitor does \emph{not} have access to the internal state of the server generating the trace, and its observations are limited to the execution that the server chooses to exhibit at runtime.  This is meant to model the RV scenarios mentioned in \secref{sec:introduction}, where the source of the executing  system cannot be analysed: from the point of view of the runtime monitoring and verification, the server description is merely used to generate traces.  Second, we note that, in a monitored server setup, 
any visible behaviour is instigated by the server, relegating the instrumented monitor to a \emph{passive} role that merely follows the server actions. Stated otherwise, the server  \emph{drives} the behaviour in a monitored system  and dictates  the execution path that the monitor can analyse at runtime.

In what follows, we explain how monitors work through a series of examples. The exposition focuses on monitors that produce rejection verdicts, but the discussion can be extended to acceptance verdicts in a straightforward manner.

\begin{example} \label{ex:monit-contract}
   The monitor $\esel{\prf{\notAct{\coact{a}}}{\no}}{\prf{\coact{a}}{\prf{\notAct{b}}{\no}}}$ checks for violations from  contracts that are expected to adhere to (\ie be supercontracts of) the contract  $\prf{\coact{a}}{\prf{b}{\nil}}$. In fact, the monitor reaches a rejection verdict whenever  a contract either emits an action that is not $\coact{a}$ at runtime, \prf{\notAct{\coact{a}}}{\no}, or else  emits an action that is not $\coact{b}$ following action $a$, \prf{\coact{a}}{\prf{\notAct{b}}{\no}}.  Consider the server $\prf{\coact{a}}{\prf{c}{\nil}}$; when instrumented with our monitor we can observe the following monitored computation whereby the monitor reaches a rejection verdict, \no.
   \begin{align*}
     & \sys{\bigl(\esel{\prf{\notAct{\coact{a}}}{\no}}{\prf{\coact{a}}{\prf{\notAct{b}}{\no}}}\bigr)}{(\prf{\coact{a}}{\prf{c}{\nil}})} \quad\traSS{\coact{a}}\quad \sys{\prf{\notAct{b}}{\no}}{\prf{c}{\nil}} \quad\traSS{c}\quad \sys{\no}{\nil}
   \end{align*}
   By contrast, when the server  $\prf{\coact{a}}{\prf{b}{\nil}}$ is instrumented with the monitor, no rejection verdict is reached; in particular, the final transition below is derived using rule \rtit{iTer} because $\prf{\notAct{b}}{\no} \traSSN{b}$.
   \begin{align*}
     & \sys{\bigl(\esel{\prf{\notAct{\coact{a}}}{\no}}{\prf{\coact{a}}{\prf{\notAct{b}}{\no}}}\bigr)}{(\prf{\coact{a}}{\prf{b}{\nil}})} \quad\traSS{\coact{a}}\quad \sys{\prf{\notAct{b}}{\no}}{\prf{b}{\nil}} \quad\traSS{b}\quad \sys{\stp}{\nil}
   \end{align*}
We emphasise the fact that monitor termination through rule \rtit{iTer} is crucial to avoid unwanted detections.  Consider a variant of the earlier  monitor,  $\prf{\coact{a}}{\prf{b}{\no}}$, which now reports violations whenever it observes the trace consisting of the action $\coact{a}$ followed by the action $b$. When composed with the system $\prf{\coact{a}}{\prf{c}{\prf{b}{\nil}}}$ we observe the following monitored computation.
\begin{align*}
   \sys{\prf{\coact{a}}{\prf{b}{\no}}}{\prf{\coact{a}}{\prf{c}{\prf{b}{\nil}}}} & \traSS{\coact{a}} \sys{\prf{b}{\no}}{\prf{c}{\prf{b}{\nil}}} \\
    & \quad\traSS{c}  \sys{\stp}{\prf{b}{\nil}}  \tag{**}\label{eq:iter} \\
    & \qquad\traSS{b} \sys{\stp}{\nil}
\end{align*}
   At transition (\ref{eq:iter}), the server can perform an action, $c$, that the monitor is not able to follow (\ie it is \emph{not} specified how the monitor should behave at that point should it observe action $c$).  Accordingly, the semantics instructs the monitor to terminate (prematurely) with an inconclusive verdict.   There are two instrumentation alternatives that could have been adopted, both of which are arguably wrong from a monitoring perspective.  The first option would have been to prohibit the server from exhibiting action $c$, which goes against the tenet that the monitor should adopt a passive role and not interfere with the execution of the program it monitors.  The second option is arguably even worse: we could have let the server transition and left the monitor in its present state, \ie    
   \begin{math}
     \sys{\prf{b}{\no}}{\prf{c}{\prf{b}{\nil}}} \traSS{c}  \sys{\prf{b}{\no}}{\prf{b}{\nil}}  
   \end{math}, 
   but then this would have led to an \emph{unspecified/erroneous} detection at the next transition 
   \begin{math}
     \sys{\prf{b}{\no}}{\prf{b}{\nil}}   \traSS{b}  \sys{\no}{\nil}  
   \end{math}.
    \exqed
\end{example}

\begin{example} \label{ex:monit-contract2}
  The server $\isel{\prf{\coact{a}}{\prf{b}{\nil}}}{\prf{c}{\prf{b}{\nil}}}$ is \emph{not} a supercontract of $\prf{\coact{a}}{\prf{b}{\nil}}$ according to \defref{def:server-pre}.  Crucially, however, in an RV setting, monitor detection depends on the runtime behaviour exhibited by the server.  This contrasts with other forms of verification which may be allowed to explore \emph{all} the execution paths of a server under scrutiny.\footnote{In the general case, a pre-deployment verification technique may also analyse \emph{infinite} paths.}
  \begin{align*}
    & \sys{\bigl(\esel{\prf{\notAct{\coact{a}}}{\no}}{\prf{\coact{a}}{\prf{\notAct{b}}{\no}}}\bigr)}{\bigl(\isel{\prf{\coact{a}}{\prf{b}{\nil}}}{\prf{c}{\prf{b}{\nil}}}  \bigr)}  \quad\traSS{\tau}\quad \sys{\bigl(\esel{\prf{\notAct{\coact{a}}}{\no}}{\prf{\coact{a}}{\prf{\notAct{b}}{\no}}}\bigr)}{(\prf{\coact{a}}{\prf{b}{\nil}})} \quad\traSS{\coact{a}}\quad \sys{\prf{\notAct{b}}{\no}}{\prf{b}{\nil}} \quad\traSS{b}\quad \sys{\stp}{\nil}\\
    & \sys{\bigl(\esel{\prf{\notAct{\coact{a}}}{\no}}{\prf{\coact{a}}{\prf{\notAct{b}}{\no}}}\bigr)}{\bigl(\isel{\prf{\coact{a}}{\prf{b}{\nil}}}{\prf{c}{\prf{b}{\nil}}}  \bigr)}  \quad\traSS{\tau}\quad \sys{\bigl(\esel{\prf{\notAct{\coact{a}}}{\no}}{\prf{\coact{a}}{\prf{\notAct{b}}{\no}}}\bigr)}{(\prf{c}{\prf{b}{\nil}})} \quad\traSS{c}\quad \sys{\no}{\prf{b}{\nil}} \quad\traSS{b}\quad \sys{\no}{\nil}
  \end{align*}
  In the first monitored computation above, the server exhibits the behaviour described by the trace $\wtraSS{\coact{a}b}$, which prohibits the monitor from detecting any violations. However, the same server exhibits a different trace $\wtraSS{cb}$ in the second monitored computation which permits monitor detection.  The rejection verdict is in fact reached after the first visible transition on action $c$, and then preserved throughout the remainder of the computation. \exqed

   
\end{example}

\begin{example} \label{ex:monit-contract-plus}
  We can monitor for violations of the contract $\esel{\prf{\coact{a}}{\prf{b}{\nil}}}{\prf{c}{\nil}}$ by composing two submonitors that monitor for the constituents.  Specifically, since the monitor $\esel{\prf{\notAct{c}}{\no}}{\prf{c}{\stp}}$ checks for violations of contract \prf{c}{\nil} and, the minimally extended monitor $\esel{\prf{\notAct{\coact{a}}}{\no}}{\prf{\coact{a}}{(\esel{\prf{\notAct{b}}{\no}}{\prf{b}{\stp}})}}$ checks for violations of \prf{\coact{a}}{\prf{b}{\nil}} as discussed in \exref{ex:monit-contract}, we can construct the composite monitor $\mtimes{(\esel{\prf{\notAct{\coact{a}}}{\no}}{\prf{\coact{a}}{(\esel{\prf{\notAct{b}}{\no}}{\prf{b}{\stp}})}})}{(\esel{\prf{\notAct{c}}{\no}}{\prf{c}{\stp}})}$ to monitor for violations of $\esel{\prf{\coact{a}}{\prf{b}{\nil}}}{\prf{c}{\nil}}$.  
  \begin{align*}
     \sys{\bigl(\mtimes{(\esel{\prf{\notAct{\coact{a}}}{\no}}{\prf{\coact{a}}{(\esel{\prf{\notAct{b}}{\no}}{\prf{b}{\stp}})}})}{(\esel{\prf{\notAct{c}}{\no}}{\prf{c}{\stp}})}\bigr)}{\esel{\prf{\coact{a}}{\prf{b}{\nil}}}{\prf{c}{\nil}}}  &\quad\traSS{\coact{a}}\quad  \sys{\mtimes{\esel{\prf{\notAct{b}}{\no}}{\prf{b}{\stp}}}{\no}}{\prf{b}{\nil}} \\
    & \qquad\traSS{b}\quad \sys{\mtimes{\stp}{\no}}{\nil}\\
    \sys{\bigl(\mtimes{(\esel{\prf{\notAct{\coact{a}}}{\no}}{\prf{\coact{a}}{(\esel{\prf{\notAct{b}}{\no}}{\prf{b}{\stp}})}})}{(\esel{\prf{\notAct{c}}{\no}}{\prf{c}{\stp}})}\bigr)}{\esel{\prf{\coact{a}}{\prf{b}{\nil}}}{\prf{c}{\nil}}}  & \quad\traSS{c}\quad \sys{\mtimes{\no}{\stp}}{\nil}
  \end{align*}
  When the composite monitor is instrumented with the contract it is expected to monitor for, we note that it does not reach a rejection along \emph{every} (parallel) submonitor.  
  \begin{align*}
    & \sys{\bigl(\mtimes{(\esel{\prf{\notAct{\coact{a}}}{\no}}{\prf{\coact{a}}{(\esel{\prf{\notAct{b}}{\no}}{\prf{b}{\stp}})}})}{(\esel{\prf{\notAct{c}}{\no}}{\prf{c}{\stp}})}\bigr)}{\prf{b}{\nil}} \quad\traSS{b}\quad  \sys{\mtimes{\no}{\no}}{\nil}
  \end{align*}
  By contrast, the violating contract above generates a rejection along every submonitor. \exqed 
\end{example}

 \exref{ex:monit-contract-plus} clearly suggests a definition of monitor rejection.

\begin{definition}[Rejection] \label{def:rej} A monitor $\mV$ is in a rejection state, denoted as \rej{\mV}, whenever it is of the form $\mtimes{\no}{\mtimes{\ldots}{\no}}$.  We overload this predicate to denote a server $p$ being rejected by a monitor $\mV$, defined formally as  
	\begin{align*}
		\rej{p,\mV}  &\deftxt \exists t,p' \cdot\; \sys{\mV}{p}\wtraS{t}\sys{\mV'}{p'} \text{ and } \rej{\mV'}
	\end{align*}
\end{definition}

\begin{example} \label{ex:monit-rej}  The monitor $\esel{\prf{\notAct{c}}{\no}}{\prf{c}{\stp}}$  rejects server $\prf{b}{\nil}$, $\rej{\prf{b}{\nil},(\esel{\prf{\notAct{c}}{\no}}{\prf{c}{\stp}})}$ as well as server $\esel{\prf{c}{\nil}}{\prf{b}{\nil}}$, $\rej{(\esel{\prf{c}{\nil}}{\prf{b}{\nil}}),(\esel{\prf{\notAct{c}}{\no}}{\prf{c}{\stp}})}$ because both may exhibit an execution trace that leads the monitor to a rejection state.  By contrast, $\esel{\prf{\notAct{c}}{\no}}{\prf{c}{\stp}}$  does \emph{not} reject server $\prf{c}{\nil}$. Recalling monitor $\mV=\bigl(\mtimes{(\esel{\prf{\notAct{\coact{a}}}{\no}}{\prf{\coact{a}}{(\esel{\prf{\notAct{b}}{\no}}{\prf{b}{\stp}})}})}{(\esel{\prf{\notAct{c}}{\no}}{\prf{c}{\stp}})}\bigr)$ from \exref{ex:monit-contract-plus}, we can also state that it rejects server $\prf{b}{\nil}$, \rej{\prf{b}{\nil},\mV}. \exqed
\end{example}



\section{Preliminary results towards Monitorability}
\label{sec:preliminary-results}
Monitorability may be broadly described as the relationship between the properties of a logic specifying program behaviour and the detection capabilities of a monitoring setup instrumented over such programs.   It is therefore parametric with respect to the logic and monitoring setup considered.    In what follows, we sketch out preliminary investigations that focus on  the monitor rejections defined in Section~\ref{sec:monitorability}, and  attempt to relate them to violations of the server preorder defined in Section~\ref{sec:language}.

We have already defined enough machinery to be able to state formally two important properties.  Definition~\ref{def:smon} states that a monitor $\mV$ \emph{soundly monitors} for a server contract $p$ if and only if, whenever it rejects a server $q$, it is indeed the case that $q$ is not a supercontract of $p$.  In a sense, the dual of this is  Definition~\ref{def:cmon}, which states that a monitor $\mV$ \emph{completely monitors} for a server contract $p$ if and only if  every $q$ that is not a supercontract of $p$ is rejected by \mV.

\begin{definition}[Rejection Sound] \label{def:smon}
	\begin{math}
	\smon{p,\mV} \deftxt\; \forall q \cdot \;\rej{q,\mV}\;\text{implies} \;p \not\srvPre q
	\end{math}. \exqed
\end{definition}

\begin{definition}[Rejection Complete] \label{def:cmon}
	\begin{math} 
	\cmon{p,\mV} \deftxt\; \forall q \cdot \; p \not\srvPre q \;\text{implies} \;\rej{q,\mV} 
	\end{math}. \exqed
\end{definition}

\noindent
We can also extend these monitorability definitions to a specification language of contracts (\ie a set of contracts). 

\begin{definition}[Language Rejection Monitorability] \label{def:mon-lang}  A set of contracts $\mathcal{C}$ is: 
  \begin{itemize}
  \item  sound rejection-monitorable iff
    $\forall p \in \mathcal{C} \cdot \exists m\in\Mon \cdot \smon{p,\mV}$
  \item  complete rejection-monitorable iff
    $\forall p \in \mathcal{C} \cdot \exists m\in\Mon \cdot \cmon{p,\mV}$ 
  \item rejection-monitorable iff $\forall p \in \mathcal{C} \cdot \exists m\in\Mon \cdot \smon{p,\mV} \text{ and }\cmon{p,\mV}$ \exqed
  \end{itemize}
\end{definition}

We can readily argue in a formal manner that the contract language \Srv of Figure~\ref{fig:syntax-semantics} \emph{cannot} be   rejection-monitorable. Consider as an example $\esel{\prf{\coact{a}}{\nil}}{\prf{b}{\nil}} \in \Srv$. If this language is  rejection-monitorable, then there must exist a monitor \mV such that $\smon{\esel{\prf{\coact{a}}{\nil}}{\prf{b}{\nil}},\mV}$ and $\cmon{\esel{\prf{\coact{a}}{\nil}}{\prf{b}{\nil}},\mV}$. We argue towards a contradiction. From Section~\ref{sec:language} we know that $\esel{\prf{\coact{a}}{\nil}}{\prf{b}{\nil}} \not\srvPre \prf{\coact{a}}{\nil}$, and thus, by $\cmon{\esel{\prf{\coact{a}}{\nil}}{\prf{b}{\nil}},\mV}$,  it must be the case that $\rej{\prf{\coact{a}}{\nil}, \mV}$. Now this rejection predicate holds if either \mV reaches a rejection state immediately or else reaches rejection after observing action $a$.   In either case, this monitor would also reject the contract $\esel{\prf{\coact{a}}{\nil}}{\prf{b}{\nil}}$ as well, which would make the monitor necessarily unsound, \ie  $\neg\smon{\esel{\prf{\coact{a}}{\nil}}{\prf{b}{\nil}},\mV}$, since, by the reflexivity property of the preorder, we have $\esel{\prf{\coact{a}}{\nil}}{\prf{b}{\nil}} \srvPre \esel{\prf{\coact{a}}{\nil}}{\prf{b}{\nil}}$.

We deem sound rejection to be the minimum correctness requirement to be expected from the contract monitors we consider.  Note, however, that the contract language \Srv of Figure~\ref{fig:syntax-semantics} is trivially sound rejection-monitorable via the monitor $\stp$; this monitor never reaches a rejection state and thus trivially satisfying $\rej{p,\stp}$ for any $p\in\Srv$.  However, we argue that this monitor, \stp,  is not very useful.

We attempt to go one step further and define an automated monitor synthesis function that returns a monitor for \emph{every server} in the contract language \Srv.  We argue, at least informally, that these synthesised monitors are, in some sense, useful because they perform a degree of violation detections.  Importantly, however, we show that these synthesised monitors are rejection sound, according to Definition~\ref{def:smon}. 

\begin{definition}[Monitor Synthesis]\label{def:mon-synthesis} The function $\monSyn{-}: \Srv \rightarrow \Mon$ synthesises a monitor from a server contract description, and is defined inductively on the structure of this contract as follows: 
	\begin{align*}
	\qquad\qquad\qquad&&\monSyn{\nil} & \deftxt \stp
	\quad &\monSyn{\prf{\alpha}{p}} &\deftxt \esel{\prf{\notAct{\alpha}}{\no}}{\prf{\alpha}{\monSyn{p}}}\\
	&&\monSyn{\esel{p}{q}} &\deftxt \mtimes{\monSyn{p}}{\monSyn{q}} \qquad\qquad
	&\monSyn{\isel{p}{q}} &\deftxt \mtimes{\monSyn{p}}{\monSyn{q}}  & & \qquad\qquad\qquad\qquad\qquad\;\;\exqed
	\end{align*}
\end{definition}

A few comments on Definition~\ref{def:mon-synthesis} are in order.  First, note that a number of the monitors considered earlier in  Section~\ref{sec:monitorability} are in fact instances of this translation.  For instance, we have
\begin{equation*}
  \monSyn{\prf{\coact{a}}{\prf{b}{\nil}}} = \esel{\prf{\notAct{\coact{a}}}{\no}}{\prf{\coact{a}}{(\esel{\prf{\notAct{b}}{\no}}{\prf{b}{\stp}})}} \qquad \text{and}\qquad \monSyn{\esel{\prf{\coact{a}}{\prf{b}{\nil}}}{\prf{c}{\nil}}} = \mtimes{(\esel{\prf{\notAct{\coact{a}}}{\no}}{\prf{\coact{a}}{(\esel{\prf{\notAct{b}}{\no}}{\prf{b}{\stp}})}})}{(\esel{\prf{\notAct{c}}{\no}}{\prf{c}{\stp}})}
\end{equation*}
from \exref{ex:monit-contract-plus}.  Secondly, note that the monitor synthesis does not attempt to perform any detection violation for the contract $\nil$.  Since $\nil$ is  in some sense
a bottom element in the preorder, no supercontract of $\nil$ is allowed to perform any visible action.  Thus, in cases where all the actions permissible in \Srv are known up front as a  \emph{finite} set $\sset{\act_1,\ldots,\act_n}$, we can improve the precision of our synthesis through the alternative definition 
\begin{math}
  \monSyn{\nil} \deftxt \esel{\prf{\notAct{\act_1}}{\no}}{\esel{\ldots}{\prf{\notAct{\act_n}}{\no}}} 
\end{math}
for the case where $p = \nil$.  Third, note that the synthesis for both internal and external choice constructs coincide which, in a sense, is due to the  inherent  discriminating limits of RV.  Consider, by way of example, the monitor syntheses below:
\begin{equation*}
  \monSyn{\esel{\prf{\coact{a}}{\nil}}{\prf{b}{\nil}} } \; = \; \mtimes{((\esel{\prf{\notAct{\coact{a}}}{\no}}{\prf{\coact{a}}{\stp}}))}{((\esel{\prf{\notAct{b}}{\no}}{\prf{b}{\stp}}))} \;=\; \monSyn{\isel{\prf{\coact{a}}{\nil}}{\prf{b}{\nil}}}
\end{equation*}
The server $\prf{c}{\nil}$ is rejected by the monitor $\mtimes{((\esel{\prf{\notAct{\coact{a}}}{\no}}{\prf{\coact{a}}{\stp}}))}{((\esel{\prf{\notAct{b}}{\no}}{\prf{b}{\stp}}))}$ and accordingly it is \emph{neither} a supercontract of  $\esel{\prf{\coact{a}}{\nil}}{\prf{b}{\nil}}$ \emph{nor} of  $\isel{\prf{\coact{a}}{\nil}}{\prf{b}{\nil}}$.  However, the server  $\prf{\coact{a}}{\nil}$ is \emph{not} rejected by the monitor $\mtimes{((\esel{\prf{\notAct{\coact{a}}}{\no}}{\prf{\coact{a}}{\nil}}))}{((\esel{\prf{\notAct{b}}{\no}}{\prf{b}{\nil}}))}$; whereas it is correct to do so in the case of monitoring for the internal choice contract $\isel{\prf{\coact{a}}{\nil}}{\prf{b}{\nil}}$ because $\isel{\prf{\coact{a}}{\nil}}{\prf{b}{\nil}} \srvPre \prf{\coact{a}}{\nil}$, it leads to lack of precision in the case of the external choice $\esel{\prf{\coact{a}}{\nil}}{\prf{b}{\nil}}$ since $\esel{\prf{\coact{a}}{\nil}}{\prf{b}{\nil}} \not\srvPre \prf{\coact{a}}{\nil}$.  In spite of these limitations, we are able to show that our proposed monitor synthesis is sound.

\begin{theorem}[Synthesis Soundness]\label{thm:synth-sound} For every server specification $p \in \Srv$, every server implementation $q \in \Srv$, and the monitor synthesis function \monSyn{-} of Definition~\ref{def:mon-synthesis}:
\begin{equation*}\text{Whenever }\; \rej{q, \monSyn{p}} \;\text{ then it is necessarily the case that }\; p \not\srvPre q
\end{equation*}  
\end{theorem}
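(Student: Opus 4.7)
The plan is strong induction on the structural size of $p$, threaded with a small auxiliary lemma: if $\sys{\monSyn{p}}{q} \wtraS{t} \sys{\mV'}{q'}$ with $\rej{\mV'}$, then $q$ weakly exhibits the trace $t$ while $p$ does not. The lemma is a routine induction on $p$ using the monitor synthesis: $\monSyn{\nil} = \stp$ only self-loops via \rtit{mVer} and never reaches $\no$; $\monSyn{\prf{\alpha}{p'}}$ can only hit $\no$ via a mismatching first action or, after matching $\alpha$, through $\monSyn{p'}$; and the merge $\mtimes{\monSyn{p_1}}{\monSyn{p_2}}$ can only arrive at a rejection shape if, by the lockstep rule \rtit{mConj}, both $\monSyn{p_i}$ independently reject on the common trace $t$. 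A useful corollary is that whenever $\monSyn{p}$ rejects at all, no top-level $\isel$ or $\esel$ branch of $p$ is $\nil$, for such a branch would plant a $\stp$ leaf in the merge tree that blocks rejection forever.

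The base case $p = \nil$ is vacuous. For $p = \prf{\alpha}{p'}$ the rejecting trace $t = \actt \cdot t'$ is non-empty and I split on $\actt$. If $\actt \neq \alpha$, take $r = \esel{(\prf{\coact{\alpha}}{\ok})}{(\prf{\coact{\actt}}{\nil})}$: since $p$ can only emit $\alpha$, the unique maximal computation of $r \paral p$ is the $\alpha$-synchronisation into $\ok \paral p'$ (success), whereas in $r \paral q$ one can trail $q$'s silent moves up to a state emitting $\actt$ and synchronise via the $\coact{\actt}$-arm into $\nil$ (failure). If $\actt = \alpha$, the IH applied to $p'$ and the residual $q_1$ of $q$ reached by its weak $\alpha$-step yields $r'$, and $r = \prf{\coact{\alpha}}{r'}$ lifts it.

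The choice cases $p = \esel{p_1}{p_2}$ and $p = \isel{p_1}{p_2}$ share the monitor $\mtimes{\monSyn{p_1}}{\monSyn{p_2}}$ and are the main obstacle. I decompose $p$ into its top-level branches $\prf{\beta_j}{p_j'}$ (all of prefix form by the corollary above). The lockstep merge forces each branch monitor to reject on $t$, so every $j$ falls into one of two classes: $J_{\neq}$ (where $\beta_j \neq \actt$) or $J_{=}$ (where $\beta_j = \actt$ and $\monSyn{p_j'}$ rejects on $t'$). Collecting the latter continuations $p_j'$ into a nested internal choice $p^\star$ yields a term of strictly smaller size than $|p|$ whose monitor still rejects the residual $q_1$ of $q$ via trace $t'$, so the strong IH supplies a distinguisher $r^\star$. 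I then take $r$ to be the external choice of $\prf{\coact{\beta_j}}{\ok}$ for each $j \in J_{\neq}$ together with the extra disjunct $\prf{\coact{\actt}}{r^\star}$, omitting either part when its index set is empty. Every maximal computation of $r \paral p$ resolves $p$'s top-level choices to some branch $\prf{\beta_j}{p_j'}$ — either by an internal $\tau$ from an $\isel$ or by direct synchronisation on $\beta_j$ — and then drops to $\ok$ (if $j \in J_{\neq}$) or to $r^\star \paral p_j'$ (if $j \in J_{=}$), where $\sat{p_j'}{r^\star}$ follows from $\sat{p^\star}{r^\star}$ and the standard $\isel$-satisfaction equivalence. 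Dually, $q$ drives $r \paral q$ through the $\coact{\actt}$-arm into $r^\star \paral q_1$, where $\neg\sat{q_1}{r^\star}$ supplies the unsuccessful maximal computation.

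The main difficulty is precisely this choice case: ensuring the escape options $\prf{\coact{\beta_j}}{\ok}$ cover every visible action $p$ (or any $\tau$-derivative of $p$) can still perform, so that no maximal computation of $r \paral p$ deadlocks with the client in a non-$\ok$ state. This is exactly what the $\nil$-free-branch corollary buys, since it prevents $p$ from silently sliding into a dead $\nil$ that would leave the client stranded.
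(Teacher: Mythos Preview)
Your overall strategy is sound and considerably more detailed than the one-line sketch in the paper (which simply asserts structural induction on $p$). However, there is a genuine gap in the choice case when $J_{=} = \emptyset$.

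Concretely, take $p = \esel{\prf{a}{\nil}}{\prf{b}{\nil}}$ and $q = \esel{\prf{a}{\nil}}{\prf{c}{\nil}}$. The monitor $\monSyn{p}$ rejects $q$ along the singleton trace $t = c$, so $\actt = c$, both leaves of $p$ land in $J_{\neq}$, and $J_{=} = \emptyset$. Your recipe then produces $r = \esel{\prf{\coact{a}}{\ok}}{\prf{\coact{b}}{\ok}}$. This client does satisfy $\sat{p}{r}$, but it \emph{also} satisfies $\sat{q}{r}$: the unique maximal computation of $r \paral q$ is the $a$-synchronisation into $\ok \paral \nil$, since $r$ offers no $\coact{c}$-arm and neither side has a $\tau$-move. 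Hence $r$ fails to witness $p \not\srvPre q$, even though the monitor rejected.

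The repair is small and local: when $J_{=} = \emptyset$, do not simply omit the $\prf{\coact{\actt}}{r^\star}$ disjunct but replace it by a poison arm $\prf{\coact{\actt}}{\nil}$, exactly mirroring what you already do in the prefix mismatch sub-case. Then $r \paral q$ can follow $q$'s weak $\actt$-step and synchronise into $\nil \paral q_1$, giving the required unsuccessful maximal computation; meanwhile $\sat{p}{r}$ is preserved because no leaf of $p$ offers $\actt$, so the poison arm is never exercised against $p$. With this fix the remainder of your argument goes through.
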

\begin{proof}
  By structural induction on the server specification $p$.
\end{proof}


\section{Conclusion}
\label{sec:conclusion}

We have presented preliminary investigations relating to the monitorability of contracts, high-level descriptions for web services.  We developed a monitoring framework that complements the operational semantics of server contracts.  We then focused on the rejection expressivity of the monitors within this framework and related it to  cases where it is unsafe to replace one server (contract) with another.  Within our simple framework, we were already able to identify limits with respect to monitor detection powers, and were able to diagnose problems with a proposed automated monitor synthesis procedure.  We were also able to formally prove that, in spite of its limit, the monitor synthesis considered is, in some sense, correct (Theorem~\ref{thm:synth-sound}).  

\paragraph*{Related and Future Work}
\label{sec:related-work}
The language of contracts for web services has been discussed in several other works prior to ours, such as \cite{Bernardi:2012:MST:2245276.2232097:Contracts,DBLP:journals/mscs/BravettiZ09a,Padovani:2009:CDAWS:contracts,Castagna:2009:TCW:1538917.1538920:Contracts}; although conceptually simple, it has been shown to be expressive enough to capture the dynamicity of interactions specified by more elaborate contract descriptions.
The server preorder considered in this paper captures the essence of the must preorder, studied in \cite{DBLP:journals/corr/BernardiH15} and the compliance preorder, studied in \cite{Padovani:2009:CDAWS:contracts,Castagna:2009:TCW:1538917.1538920:Contracts};  in our simplistic case of finite servers and clients, the two preorders coincide (modulo minor technical details regarding client satisfaction and computation success).  Our notion of monitorability is inspired by that presented in \cite{FrancalanzaAI:2015:uHML}, which relates process satisfaction of a branching-time logic, $p\models\phi$, with detections of monitors synthesised from formulas in this logic, $\sys{\monSyn{\phi}}{p}$.  The instrumentation relation considered in this paper is in fact an adaptation to the one used in \cite{FrancalanzaAI:2015:uHML}. 

For future work, we aim to achieve a more comprehensive study of monitorability than the preliminary one presented in Section~\ref{sec:preliminary-results}. In particular, we plan to consider monitor acceptances as a verdict in addition to rejections, establish stronger results with respect to rejections and consider extended contract descriptions similar to  \cite{DBLP:journals/corr/BernardiH15,Castagna:2009:TCW:1538917.1538920:Contracts} that include recursion and the potential for infinite computation.  This will lead to different notions of server refinements such as those resulting from compliance and fair testing preorders  \cite{DBLP:journals/mscs/BravettiZ09a,Padovani:2009:CDAWS:contracts}.  It will be interesting to study whether any of the aforementioned server preorder variants are more monitorable than the others.




\paragraph*{Acknowledgements:}
\label{sec:acknowledgements}
This research was partly supported by the project ``TheoFoMon: Theoretical Foundations for Monitorability''  of the Icelandic Research Fund.

\bibliographystyle{eptcs} 
\bibliography{references}



\end{document}